\title[Learning for networked estimation]{Learning distributed channel access policies  for networked estimation: data-driven optimization in the mean-field regime}
\author{%
 \Name{Marcos M. Vasconcelos} \Email{marcosv@vt.edu}\\
 \addr Virginia Polytechnic Institute and State University
 %\AND
 %\Name{Author Name2} \Email{xyz@sample.com}\\
 %\addr Address 2%
}
\DeclareMathOperator{\Equaldef}{\overset{def}{=}}
\begin{document}

\maketitle

\thispagestyle{empty}

\begin{abstract}
The problem of communicating sensor measurements over shared networks is  prevalent in many modern large-scale distributed systems such as cyber-physical systems, wireless sensor networks and the internet of things. Due to bandwidth constraints, the system designer must jointly design decentralized medium access transmission and estimation policies that accommodate a very large number of devices in extremely contested environments such that the collection of all observations is reproduced at the destination with the best possible fidelity. We formulate a remote estimation problem in the mean-field regime where a very large number of sensors communicate their observations to an access point, or base-station, under a strict constraint on the maximum fraction of transmitting devices. We show that in the mean-field regime, this problem exhibits a structure which enables tractable optimization algorithms. More importantly, we obtain a data-driven learning scheme that admits a finite sample-complexity guarantee on the performance of the resulting estimation system under minimal assumptions on the data's probability density function.
\end{abstract}

\begin{keywords}%
  Networks, medium access control, stochastic optimization, convex-concave procedure, statistical learning
\end{keywords}

\section{Introduction}

State estimation is a fundamental building block of control systems subject to uncertain disturbances. Most modern applications rely on smart sensors that collect and transmit data to a remote location. In particular, for cyber-physical systems, where a large number of sensors collect and transmit data over a bandwidth limited network, there is a need to communicate efficiently and maintain the quality of the estimates as close as possible to optimal. Moreover, these conflicting goals must be achieved under the lack of knowledge of the underlying stochastic model of the environment. This task requires a combination of techniques in machine learning, decision-theory, and optimization. 

%\begin{itemize}
%  \item Limit the main text (not counting references) to 10 PMLR-formatted pages, using this template.
%  \item Include {\em in the main text} enough details, including proof details, to convince the reviewers of the contribution, novelty and significance of the submissions.
%\end{itemize}

Consider a distributed remote estimation system, where a very large number of sensors collect data and communicate with a destination over a constrained network. 
%Our goal is to design the system such that the distortion between the observations and the estimates is as close as possible to the optimal solution. 
To obtain a tractable problem, we assume that the data is independent and identically distributed across sensors. When the number of sensors in the system tends to infinity, this assumption leads to an optimization problem for which the optimal communication policy is determined by designing a quantizer under a rate constraint, a problem known to be non-convex. We show how to obtain a locally optimal solution using an algorithm known as the Convex-Concave procedure, which is guaranteed to converge regardless of the underlying probability density function of the data.

During the last decade, there has been significant research activity in systems with an asymptotically large number of coupled decision-making agents \citep{Mahajan:2013,Gagrani:2020,Sanjari:2021}. Such problem formulations are very relevant in modern applications such as industrial internet of things, where many tiny, low-power devices sense the environment and communicate with a remote gateway, base-station or access point \citep{Gatsis:2021}. Another application where the mean-field regime is appropriate for analysis is in robotic swarms \citep{Zheng:2021}. %, and on control of bacterial colonies, in which the number of agents is in the figure of billions. 
State estimation for control of power systems in the mean-field regime has been considered in the literature \citep[and references therein]{Chen:2017}. The main idea behind studying teams with an infinite number of identical/symmetric agents is to allow the system designer to perform the optimization over a single policy instead of accounting for the overwhelming combinatorial-like complexity that arise from the interactions among finitely many discrete agents.

The problem of remote estimation over a collision channel was introduced in \citep{Vasconcelos:2017} for a finite number of sensors communicating over a channel with unit capacity. Recently, \cite{Zhang:2021} extended the setup for a channel with arbitrary capacity, and introduced the first instance of such remote estimation problems when the distribution is unknown. Finally, \cite{Vasconcelos:2021b} considered using a nonparametric statistics technique known as \textit{Kernel Density Estimation} (KDE) \citep{Wasserman:2006}, to obtain a scheme that is more stable and and sample-efficient than the one obatained via empirical-risk minimization, also known as sample average approximation in the stochastic programming literature \citep{Shapiro:2021}. We use the fact that our algorithm is independent of the data's pdf to come up with a data-driven scheme where the system designer relies only on a finite collection of data points using KDE. KDE enables a data-based algorithm and with the choice of a Gaussian kernel, which leads to an algorithm is free of integrals, implying in great numeric stability \citep{Asi:2019}. More importantly, we obtain a finite sample complexity bound on the probability that system will operate within the desired collision free regime, as a function of the state-dimension, the number of samples and a capacity back-off parameter. Our sample complexity result reveals an information theoretic interpretation, where the number of samples required to guarantee operation in the target regime is inversely proportional to how far we are willing to operate from the system capacity.

\begin{figure}[t!]
\centering
\includegraphics[width=\textwidth]{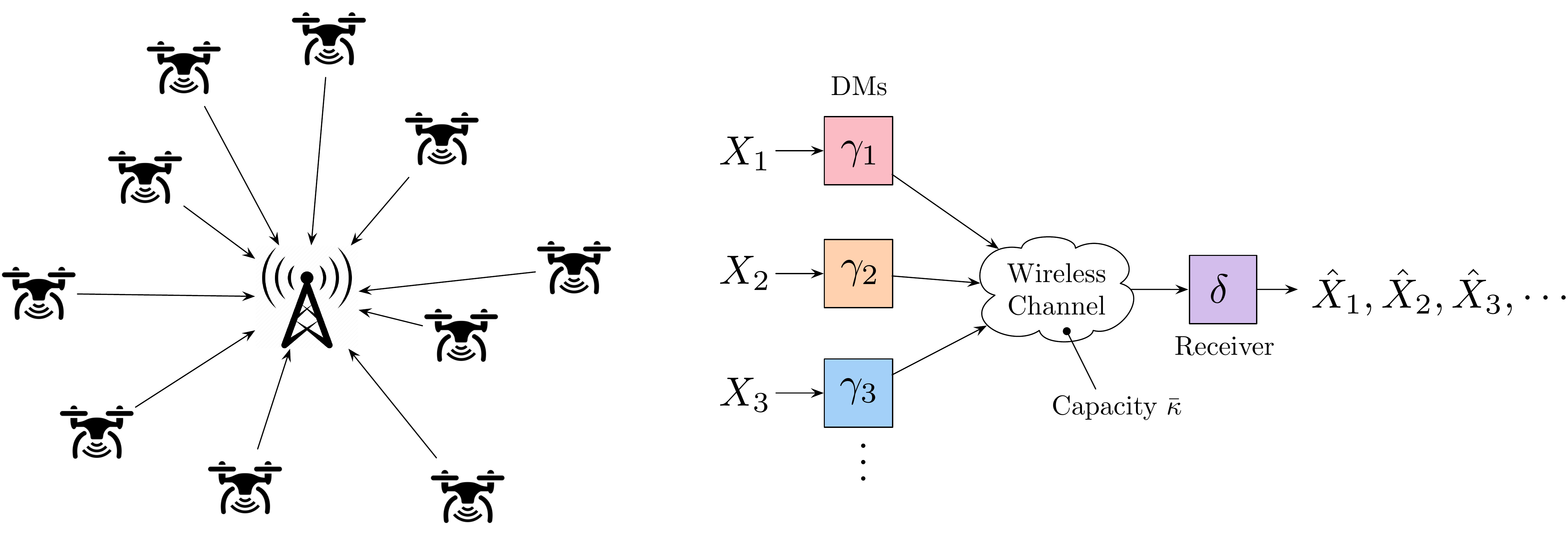}
\caption{System diagram: A large number of decision makers in a remote estimation system communicating their observations to a base station or access point without coordination, and its block diagram abstraction considered for analytical problem formulation herein.}
\end{figure}

%,Lipp:2016
%,Lanckriet:2009}

\section{Problem setup}

Consider a remote estimation system where a collection of $n$ agents observe independent and identically distributed (iid) random vectors $\{X_i\}_{i=1}^n$, where $X_i \in \mathbb{R}^d$ and $X_i \sim f$, $i\in \{1,\cdots,n\}$. Assume throughout this paper that the data's probability density function (pdf) $f$ is smooth, and supported over the entire $\mathbb{R}^d$.\footnote{For instance, this assumption is satisfied if a Gaussian mixture describes the data well and encompasses a broad class of distributions.} The agents communicate their observations to a remote destination over a shared communication medium: only a portion $\kappa(n)<n$ of the agents is allowed to communicate noiselessly at any time. The agents behave strategically as to which observations to transmit. After observing $X_i$, the $i$-th agent makes a binary decision $U_i\in\{0,1\}$ to transmit its observation ($U_i=1$) or to remain silent ($U_i=0$). The decision variable $U_i$ is computed according to a policy $\gamma_i:\mathbb{R}^d \rightarrow \{0,1\}$, such that $U_i=\gamma_i(X_i).$

If the number of transmissions is less than or equal to the capacity $\kappa(n)$, all the transmitted observations are perfectly recovered at the receiver, and only the non-received obervations need to be estimated. If the number of transmissions is larger than the channel capacity $\kappa(n)$, nothing can be decoded at the receiver, and \textit{all} of the observations need to be estimated. This is mathematically described by the following estimation rule:
\begin{equation}\label{eq:estimator}
\hat{X}_i = \begin{cases}
X_i, &  \text{if} \ \ \sum_{j=1}^nU_j \leq \kappa(n), \ U_i=1 \\ 
\theta_i, &  \text{if} \ \ \sum_{j=1}^nU_j \leq \kappa(n), \ U_i=0 \\ 
\alpha_i, &  \text{if} \ \ \sum_{j=1}^nU_j > \kappa(n)
\end{cases}
\end{equation}
where $\theta_i,\alpha_i \in \mathbb{R}^d$, $i\in \{1,\cdots,n\}$. Our goal is to jointly design the communication policies $\boldsymbol{\gamma}\Equaldef \{\gamma_i\}_{i=1}^n$ and the estimation parameters $\boldsymbol{\theta}\Equaldef \{\theta_i\}_{i=1}^n$ and $\boldsymbol{\alpha}\Equaldef \{\alpha_i\}_{i=1}^n$ that minimize the normalized mean-squared error (NMSE) between the observations and their estimates:
\begin{equation}\label{eq:nmse}
\mathcal{J}_n(\boldsymbol{\gamma},\boldsymbol{\theta},\boldsymbol{\alpha}) \Equaldef \frac{1}{n} \mathbf{E} \left[ \sum_{i=1}^n\big\|X_i-\hat{X}_i\big\|^2 \right].
\end{equation}

\begin{theorem}
If $\lim_{n\rightarrow \infty} \kappa(n)/n = \bar{\kappa} \in (0,1)$, the problem of jointly optimizing the NMSE with respect to the transmission and estimation policies for a system where $n\rightarrow \infty$ is equivalent to solving:
\begin{align}\label{eq:problem}
\underset{\gamma \in \Gamma, \ \theta \in \mathbb{R}^d}{\mathrm{minimize}} \qquad & \mathbf{E}\big[\|X-\theta\|^2 \mid U=0\big]\mathbf{P}\big(U=0 \big)  \\ 
\mathrm{subject \ to}\qquad & \mathbf{P}\big(U=1 \big) \leq \bar{\kappa},
\end{align}
where $X\sim f$ and $\Gamma = \big\{ \gamma \mid \gamma: \mathbb{R}^d \rightarrow \{0,1\}\big\}$. 
\end{theorem}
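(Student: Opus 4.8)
The plan is to start by writing the per-agent squared error explicitly in terms of the collision event. Let $N = \sum_{j=1}^n U_j$ denote the total number of transmissions and let $\mathcal{A} = \{N \le \kappa(n)\}$ be the decoding (no-collision) event. From the estimator in \eqref{eq:estimator}, the error of agent $i$ is zero when it transmits successfully, equals $\|X_i-\theta_i\|^2$ when it is silent with no collision, and equals $\|X_i-\alpha_i\|^2$ under a collision, so that
$$\|X_i-\hat X_i\|^2 = (1-U_i)\,\mathbf{1}_{\mathcal{A}}\,\|X_i-\theta_i\|^2 + \mathbf{1}_{\mathcal{A}^c}\,\|X_i-\alpha_i\|^2.$$
Substituting into \eqref{eq:nmse} splits $\mathcal{J}_n$ into a ``silence'' term and a ``collision'' term. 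Before taking limits I would invoke the exchangeability of the iid data to reduce to symmetric profiles $\gamma_i=\gamma$, $\theta_i=\theta$, $\alpha_i=\alpha$: averaging any profile over a uniformly random relabeling leaves $\mathcal{J}_n$ unchanged and produces an exchangeable policy, which is the standard justification for optimizing over a single representative policy in the mean-field regime.

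The key step, which dissolves the apparent coupling introduced by $\mathcal{A}$, is a conditional-independence factorization. Writing $S=\sum_{j=2}^n U_j \sim \mathrm{Bin}(n-1,p)$ with $p = \mathbf{P}(U=1)$, the variable $S$ is independent of the pair $(X_1,U_1)$. On the event $\{U_1=0\}$ the decoding event is exactly $\{S\le\kappa(n)\}$, so the $(1-U_1)$ factor lets me pull the indicator out as an independent factor and obtain the \emph{exact} identity, valid for every finite $n$,
$$T_1 = \mathbf{E}\big[\|X-\theta\|^2\mid U=0\big]\,\mathbf{P}(U=0)\cdot\mathbf{P}\big(\mathrm{Bin}(n-1,p)\le\kappa(n)\big),$$
together with an analogous identity for the collision term $T_2$ in terms of $\mathbf{P}\big(\mathrm{Bin}(n-1,p)\ge\kappa(n)\big)$ and $\mathbf{E}\big[\|X-\alpha\|^2\mid U\big]$. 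Since $\kappa(n)/n\to\bar\kappa$, a Hoeffding/LLN estimate shows these Binomial tails exhibit a sharp $0/1$ phase transition at $p=\bar\kappa$: the decoding probability tends to $1$ when $p<\bar\kappa$ and to $0$ when $p>\bar\kappa$. Hence for $p<\bar\kappa$ the silence term converges to $\mathbf{E}[\|X-\theta\|^2\mid U=0]\,\mathbf{P}(U=0)$ while the collision term vanishes (using $\mathbf{E}\|X\|^2<\infty$), and for $p>\bar\kappa$ the roles reverse and the cost tends to $\min_\alpha\mathbf{E}\|X-\alpha\|^2=\mathrm{Var}(X)$.

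It remains to argue that this phase transition converts the collision penalty into the hard constraint $\mathbf{P}(U=1)\le\bar\kappa$. I would note that the silent policy $\gamma\equiv0$ is feasible ($p=0$) and already attains $\mathrm{Var}(X)$, so any policy with $p>\bar\kappa$, whose limiting cost is $\mathrm{Var}(X)$, is never strictly better; therefore the infimum of the limiting cost is attained within $\{p\le\bar\kappa\}$ and equals the value of the constrained single-agent program in \eqref{eq:problem}. The boundary case $p=\bar\kappa$, where the tail tends to $1/2$ by the CLT, does not affect the infimum because the objective is continuous in $p$ and $\{p<\bar\kappa\}$ is dense in $\{p\le\bar\kappa\}$, so the constraint may be taken closed. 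I expect the main obstacle to be precisely this last reduction --- rigorously justifying the interchange of limit and expectation (where the finite-$n$ factorization and a dominating integrable envelope from $\mathbf{E}\|X\|^2<\infty$ do the work) and showing that the soft, large-deviations penalty for exceeding capacity hardens exactly into the constraint at the threshold $\bar\kappa$, rather than the routine concentration estimates themselves.
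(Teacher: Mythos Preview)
Your proposal is correct and follows essentially the same route as the paper: decompose the NMSE into a silence term and a collision term, symmetrize via the iid assumption, show that the Binomial tail exhibits a $0/1$ phase transition at $p=\bar\kappa$, and then observe that the constant-collision regime is dominated by the feasible silent policy, which converts the soft penalty into the hard constraint. The only substantive difference is the tool used for the phase transition---you invoke Hoeffding's inequality on the Binomial tails directly, whereas the paper writes the tail as $\mathbf{E}[\mathbf{1}(n^{-1}\sum U_j>\kappa(n)/n)]$ and appeals to the strong law of large numbers plus dominated convergence---and you are somewhat more careful than the paper about the conditional-independence factorization and the boundary case $p=\bar\kappa$.
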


%The underlying assumption is that only a limited number number of devices can communicate with the receiver. We assume that the capacity of the communication link is $\kappa(n)<n$ devices per time-step.

%The system designer is also challenged on how to design the system that performs near optimality in the absence of full knowledge on the distribution of the random variables. In this paper we attempt to provide an assymptotic theory to this problem. First we establish the structural results on the optimality of threshold policies when the number of agents tend to infinity. Then, we show how to design an algorithm that works for any pdf $f$. Finally, we provide a data-driven algorithm that performs near optimality.

%\begin{equation}
%f_X : \mathbb{R}^d \rightarrow \mathbb{R}
%\end{equation}

%\begin{equation}
%X_i \sim f_X, \ \ i\in[n]
%\end{equation}

\begin{proof}
Using the law of total expectation, we expand the NMSE in \cref{eq:nmse} as:
\begin{multline}
\mathcal{J}_n(\boldsymbol{\gamma},\boldsymbol{\theta},\boldsymbol{\alpha}) 
 =  \frac{1}{n} \sum_{i=1}^n \mathbf{E} \left[ \big\|X_i-\hat{X}_i\big\|^2 \ \bigg|\ U_i=0,\sum_{j\neq i}^n U_j \leq \kappa(n)\right]\mathbf{P}\left(U_i=0\right) \mathbf{P}\left( \sum_{j=1}^n U_j\leq \kappa(n) \right) \\
+ \frac{1}{n} \sum_{i=1}^n \mathbf{E} \left[ \big\|X_i-\hat{X}_i\big\|^2 \ \bigg|\ U_i=1,\sum_{j\neq i}^n U_j \leq \kappa(n)-1\right]\mathbf{P}\left(U_i=1\right) \mathbf{P}\left( \sum_{j\neq i}^n U_j\leq \kappa(n) -1 \right)
  \\ +
 \frac{1}{n} \mathbf{E} \left[ \sum_{i=1}^n\big\|X_i-\hat{X}_i\big\|^2 \ \bigg|\ \sum_{j=1}^n U_j > \kappa(n)\right]\mathbf{P}\left( \sum_{j=1}^n U_j > \kappa(n) \right).
\end{multline}
Then, using the definition of the estimator in \cref{eq:estimator}, and the iid assumption, we get: %independence of the random variables $\{X_j\}$, we get:
\begin{multline}
\mathcal{J}_n(\boldsymbol{\gamma},\boldsymbol{\theta},\boldsymbol{\alpha}) 
 =  \frac{1}{n} \sum_{i=1}^n \mathbf{E} \left[ \big\|X_i-\theta_i\big\|^2 \ \bigg|\ U_i=0\right]\mathbf{P}\left(U_i=0\right) \mathbf{P}\left( \sum_{j=1}^n U_j\leq \kappa(n) \right) %\\
%+ \frac{1}{n} \sum_{i=1}^n \mathbf{E} \left[ \big\|X_i-\hat{X}_i\big\|^2 \ \bigg|\ U_i=1,\sum_{j\neq i}^n U_j \leq \kappa(n)-1\right]\mathbf{P}\left(U_i=1\right) \mathbf{P}\left( \sum_{j\neq i}^n U_j\leq \kappa(n) -1 \right)
  \\ +
 \frac{1}{n} \mathbf{E} \left[ \sum_{i=1}^n\big\|X_i-\alpha_i\big\|^2 \ \bigg|\ \sum_{j=1}^n U_j > \kappa(n)\right]\mathbf{P}\left( \sum_{j=1}^n U_j > \kappa(n) \right).
\end{multline}

From the iid assumption of the observation sequence $\{X_i\}_{i=1}^n$, we may assume without loss of optimality that every agent uses the same policy $\gamma_i = \gamma, \ i\in\{1,\cdots,n\}$. Thus, the number of agents that decide to communicate their observations to the receiver is a Binomial random variable. Let $p = \mathbf{P}(U_i=1)$. %Then, define
%\begin{equation}
%F_n(p) \Equaldef \mathbf{P}\left(\sum_{j=1}^n U_j > \kappa(n)\right). %= \sum_{m=0}^{\kappa(n)} \binom{n}{m}p^m(1-p)^{n-m}
%\end{equation}
For $\bar{\kappa} \in (0,1)$, the following holds:
 \begin{equation}\label{eq:convergence}
 \lim_{n\rightarrow \infty} \frac{\kappa(n)}{n} = \bar{\kappa} \ \ \ \Longrightarrow \ \ \ \mathbf{P}\Bigg(\sum_{j=1}^n U_j > \kappa(n)\Bigg) \longrightarrow \mathbf{1}(p>\bar{\kappa}), \ \ \mathrm{a.e.}
\end{equation}
The last step is proved using the following sequence of identities:
%From the definition of $F_n(\cdot)$ we show \cref{eq:convergence} using the following sequence of identities:
\begin{multline}
\lim_{n\rightarrow \infty} \mathbf{P}\Bigg(\sum_{j=1}^n U_j > \kappa(n)\Bigg) =  \lim_{n\rightarrow \infty} \mathbf{E} \left[ \mathbf{1}\left(\frac{1}{n}\sum_{j=1}^nU_j > \frac{\kappa(n)}{n}\right)\right] \\
 \stackrel{(a)}{=}  \mathbf{E} \left[ \lim_{n\rightarrow \infty}  \mathbf{1}\left(\frac{1}{n}\sum_{j=1}^nU_j > \frac{\kappa(n)}{n}\right)\right] 
 \stackrel{(b)}{=}  \mathbf{E} \Big[  \mathbf{1}\big(p > \bar{\kappa}\big)\Big]  \stackrel{(c)}{=}  \mathbf{1}\big(p > \bar{\kappa}\big),
\end{multline}
where $(a)$ follows from the Lebesgue dominated convergence theorem, $(b)$ follows from the strong law of large numbers, and $(c)$ follows from the fact that the argument of the expectation operator is a deterministic function. Therefore, using the fact that $\{X_i\}$ is iid, we argue that $\theta_1=\theta_2=\cdots$. Moreover, it is possible to show that the optimal value of $\alpha^\star_i$ satisfies:
\begin{equation}
\lim_{n\rightarrow \infty} \alpha^\star_i = \mathbf{E}\bigg[X_i  \  \bigg|  \  \sum_{j=1}^nU_j>\kappa(n)\bigg] = \mathbf{E}[X], \ \ i\in\{1,2,\cdots\}.
\end{equation}
%\begin{multline}
%\lim_{n\rightarrow \infty} \mathcal{J}_n(\boldsymbol{\gamma},\boldsymbol{\theta},\boldsymbol{\alpha})  = \lim_{n\rightarrow \infty}\frac{1}{n}\sum_{i=1}^n\mathbf{E}\big[\|X-\theta_i\|^2 \mid U_i=0\big](1-p) \mathbf{1}\big(p\leq \bar{k}\big) \\
%+ \lim_{n\rightarrow \infty}\frac{1}{n}\sum_{i=1}^n\mathbf{E}\bigg[\|X_i-\alpha_i\|^2  \  \bigg|  \  \sum_{j=1}^nU_j>\kappa(n)\bigg]\mathbf{1}\big(p> \bar{k}\big).
%\end{multline}

%Again using the iid assumption on $\{X_i\}$, we argue that $\theta_i$ are iden 

\noindent Observe from \cref{eq:convergence} that by an appropriate choice of $p$, we may operate in one of two regimes: the \textit{collision-free} regime ($p\leq \bar{\kappa}$) or the \textit{constant-collision} regime ($p>\bar{\kappa}$). And consequently, the NMSE in the regime when $n\rightarrow \infty$ becomes:
\begin{equation}\label{eq:Jinfty}
\mathcal{J}_{\infty} (\gamma,\theta) = \begin{cases}
\mathbf{E}\big[\|X-\theta\|^2 \mid U=0\big]\mathbf{P}\big(U=0 \big)  & \text{if} \ \ \mathbf{P}\big(U=1 \big) \leq \bar{\kappa} \\
\mathbf{E}\big[\|X- \mathbf{E}[X]\|^2 \big] & \text{otherwise.}
\end{cases}
\end{equation}
Clearly, the constant-collision regime offers worse performance than the collision-free regime. %Therefore, for a remote estimation system with an assymptotically large number of agents, the optimization problem we are interested in solving is: Let $ \Gamma \Equaldef \big\{ \gamma \mid \gamma : 
%\mathbb{R}^d\rightarrow \{0,1\} \big\}$.
%\begin{align}\label{eq:problem}
%\underset{\gamma \in \Gamma, \ \theta \in \mathbb{R}^d}{\mathrm{minimize}} \qquad & \mathbf{E}\big[\|X-\theta\|^2 \mid U=0\big]\mathbf{P}\big(U=0 \big)  \\
%\mathrm{subject \ to} \qquad & \mathbf{P}\big(U=1 \big) \leq \bar{\kappa}.
%\end{align}
\end{proof}

%\begin{proof}
%The proof can be found in Appendix~\ref{sec:opt_problem}.
%\end{proof}

\begin{remark}
Notice that in the asymptotic regime, the optimization variables $\boldsymbol{\alpha}$ are not present, meaning that the detrimental effect of collisions can be mitigated by adjusting the probability of transmission to operate below the asymptotic channel capacity parameter $\bar{\kappa}$.
\end{remark}

%\subsection{Infinite agents}

\section{Solving the optimization problem}

Once the constrained optimization problem for the remote estimation system in the asymptotic regime is established, we may obtain the structure of its optimal solutions. Notice that there are two variables in the optimization problem:  the function $\gamma$, and a vector $\theta$. Therefore, this is an infinite dimensional problem. Moreover, the objective in the optimization problem is non-convex. %To see this, consider the obective in integral form:
%\begin{equation}
%\tilde{\mathcal{J}}(\gamma,\theta) = \int_{\mathbb{R}^d} %\|x-\theta\|^2 \mathbf{1}\big(\gamma(x) = 0 \big) f(x)dx.
%\end{equation}
%Now the dependence on the scheduling function $\gamma$ is explicit. The lack of convexity comes from the product of the convex term $\|x-\theta\|$ and the indicator function of the the non-transmission set $\{x \mid \gamma(x)=0\}.$%Now, consider two scheduling policies $\gamma_1$ and $\gamma_2$ such that $\gamma_1(x)=0,\ x\leq 1$ and $\gamma_2(x)=0, \ x\geq-1$. 
Writing the Lagrangian associated with \cref{eq:problem}, we characterize its saddle points. Let $\lambda \geq 0$ and define:
\begin{equation}\label{eq:Lagrangian}
\mathcal{L}(\gamma,\theta,\lambda) \Equaldef \int_{\mathbb{R}^d} \|x-\theta\|^2 \mathbf{1}\big(\gamma(x) = 0 \big) f(x)dx + \lambda\Big(\int_{\mathbb{R}^d}  \mathbf{1}\big(\gamma(x) = 1 \big) f(x)dx -\bar{\kappa} \Big)
\end{equation}
First, we have the following structural result, which establishes that the communication policy in every saddle point is of the threshold type \citep[and references therein]{Vasconcelos:2017}.

\begin{proposition} \label{prop:threshold}
If $(\theta^\star, \gamma^\star, \lambda^\star)$ is a saddle-point of $\mathcal{L}(\theta,\gamma,\lambda)$, then
\begin{equation}\label{eq:threshold}
\gamma^\star(x) = \begin{cases}
0, & \text{if} \ \ \|x-\theta^\star\|^2 \leq \lambda^\star \\
1, & \text{otherwise.}
\end{cases}
\end{equation}
Furthermore, 
\begin{equation}\label{eq:probability}
\mathbf{P} \big(\|X-\theta^\star\|^2 > \lambda^\star \big) = \bar{\kappa}
\ \ \
and
 \ \ \
 \theta^{\star} = \frac{1}{1-\bar{\kappa}}\int_{\mathbb{R}^d} x\mathbf{1}\big(\| x-\theta^{\star}\|^2 \leq \lambda^\star  \big) f(x)dx.
\end{equation} 
\end{proposition}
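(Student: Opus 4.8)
The plan is to exploit the saddle-point definition directly: since $(\theta^\star,\gamma^\star,\lambda^\star)$ is a saddle point, the pair $(\theta^\star,\gamma^\star)$ jointly minimizes $\mathcal{L}(\cdot,\cdot,\lambda^\star)$ while $\lambda^\star$ maximizes $\mathcal{L}(\theta^\star,\gamma^\star,\cdot)$ over $\lambda\geq 0$. A joint minimizer is in particular optimal in each coordinate with the other frozen, so I can derive each optimality condition separately. First I would rewrite \cref{eq:Lagrangian} using $\mathbf{1}(\gamma(x)=0)=1-\gamma(x)$ and $\mathbf{1}(\gamma(x)=1)=\gamma(x)$ to collect all $\gamma$-dependence into a single integral:
\begin{equation}
\mathcal{L}(\gamma,\theta,\lambda)=\int_{\mathbb{R}^d}\|x-\theta\|^2 f(x)\,dx+\int_{\mathbb{R}^d}\gamma(x)\big(\lambda-\|x-\theta\|^2\big)f(x)\,dx-\lambda\bar{\kappa}.
\end{equation}

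Holding $\theta^\star,\lambda^\star$ fixed, the minimization over $\gamma\in\Gamma$ decouples pointwise: for each $x$ one chooses $\gamma(x)\in\{0,1\}$ to minimize $\gamma(x)\big(\lambda^\star-\|x-\theta^\star\|^2\big)$. The minimizer sets $\gamma(x)=1$ exactly when $\|x-\theta^\star\|^2>\lambda^\star$ and $\gamma(x)=0$ when $\|x-\theta^\star\|^2<\lambda^\star$; on the boundary sphere the integrand vanishes and, since $f$ is absolutely continuous, this set has measure zero, so the convention $\gamma^\star=0$ there is immaterial. This establishes the threshold form \cref{eq:threshold}.

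Next I would pin down the multiplier. Substituting $\gamma^\star$, the map $\lambda\mapsto\mathcal{L}(\theta^\star,\gamma^\star,\lambda)$ is affine with slope $\mathbf{P}(\|X-\theta^\star\|^2>\lambda^\star)-\bar{\kappa}$. For a finite maximizer on $\lambda\geq 0$ to exist, the slope must be nonpositive; and if it were strictly negative, the maximizer would be $\lambda^\star=0$, which forces $\gamma^\star(x)=1$ for all $x\neq\theta^\star$ and hence $\mathbf{P}(\|X-\theta^\star\|^2>\lambda^\star)=1>\bar{\kappa}$, making the slope strictly positive---a contradiction, since $\bar{\kappa}\in(0,1)$. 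Therefore the slope is exactly zero, which is the activity condition $\mathbf{P}(\|X-\theta^\star\|^2>\lambda^\star)=\bar{\kappa}$ in \cref{eq:probability}. Finally, freezing $\gamma^\star$, the Lagrangian is a strictly convex quadratic in $\theta$ with Hessian $2(1-\bar{\kappa})I\succ 0$; setting its gradient to zero and using $1-\gamma^\star(x)=\mathbf{1}(\|x-\theta^\star\|^2\leq\lambda^\star)$ together with $\int(1-\gamma^\star)f\,dx=1-\bar{\kappa}$ yields the centroid formula for $\theta^\star$.

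I expect the main obstacle to be conceptual rather than computational: the conditions are self-referential, since $\theta^\star$ and $\lambda^\star$ appear inside the very indicator that defines $\gamma^\star$, so they cannot be solved in closed form and must be read as a coupled fixed-point system. The resolution is precisely that the saddle-point definition licenses coordinate-wise optimization---each stationarity condition is derived with the other starred variable held fixed---so the circularity never obstructs the derivation and only surfaces when the three equations are collected at the end. A secondary point requiring care is the degenerate-multiplier case above, where one must invoke $\bar{\kappa}<1$ and the absolute continuity of $f$ to rule out $\lambda^\star=0$.
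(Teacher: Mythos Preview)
Your proof is correct and follows the same coordinate-wise strategy as the paper: a pointwise argument for the threshold structure of $\gamma^\star$, and first-order stationarity in $\lambda$ and $\theta$ for the remaining two conditions. Your treatment of the multiplier is in fact more careful than the paper's---the paper simply sets $\partial\mathcal{L}/\partial\lambda=0$, whereas you explicitly rule out the boundary case $\lambda^\star=0$ via the contradiction with $\bar{\kappa}<1$---but the overall route is the same.
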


\begin{proof}
From the definition of the Lagrangian in \cref{eq:Lagrangian}, for any $\theta \in \mathbb{R}^d$ and $\lambda \geq 0$, the communication policy that minimizes it may be constructed by assigning to a set $\mathcal{A}_0 \subset \mathbb{R}^d$, all the points $x\in \mathbb{R}^d$ such that $\| x-\theta\|^2\leq \lambda$. Otherwise, the point is assigned to $\mathcal{A}_1 = \mathbb{R}^d\setminus\mathcal{A}_0$. If there are any sets with non-zero measure for which $\| x-\theta\|^2\leq \lambda$ assigned to $\mathcal{A}_1$, we can reduce the Lagrangian even further and $(\theta^\star, \gamma^\star, \lambda^\star)$ would not constitute a saddle-point. 

To establish the second result, we take the derivative of \cref{eq:Lagrangian} with respect to $\lambda$ and use \cref{eq:threshold}, which leads to:
\begin{eqnarray}
\frac{\partial}{\partial \lambda}\mathcal{L}(\gamma^\star,\theta^\star,\lambda^\star) = 0 &\Longrightarrow & \int_{\mathbb{R}^d} \mathbf{1}\big(\gamma^{\star}(x) = 1 \big) f(x)dx = \mathbf{P}\Big(\|X -\theta^\star \|^2> \lambda^\star \Big) = \bar{\kappa}.
\end{eqnarray}

Lastly, taking the gradient of \cref{eq:Lagrangian} with respect to $\theta$, we have:
\begin{eqnarray}
\nabla_{\theta}\mathcal{L}(\gamma^\star,\theta^\star,\lambda^\star) = 0 &\Longrightarrow & \int_{\mathbb{R}^d} -2(x-\theta^\star)\mathbf{1}\big( \gamma^\star(x) = 0 \big) f(x)dx =0. 
\end{eqnarray}
Then, using the characterization in \cref{eq:threshold,eq:probability}, we have:
\begin{equation}\label{eq:centroid}
\int_{\mathbb{R}^d}x\mathbf{1}\big( \|x-\theta^\star \|^2 \leq \lambda^\star \big) f(x)dx = \theta^\star \underbrace{\int_{\mathbb{R}^d}\mathbf{1}\big( \|x-\theta^\star \|^2 \leq \lambda^\star \big) f(x)dx}_{= 1-\bar{\kappa}}.
\end{equation}
\end{proof}

%Since $\bar{\kappa}<1$, dividing both sides of \cref{eq:centroid} by $1-\bar{\kappa}$ leads to \cref{eq:centroid_condition}.

%Even though the characterization above bears some intuition: the optimal estimate when a sensor does not transmit is the centroid of the no-transmission region, it is not very useful. Notice that $\theta^\star$ appears on both sides of \cref{eq:centroid_condition}, leading to a transcendental equation. However, 

The structural results stated in Proposition~\ref{prop:threshold} help us define an algorithm that converges to a saddle-point of the Lagrangian and therefore defines a locally optimal policy for the remote estimation system in the asymptotic regime. First, consider the Laplacian when $\lambda$ and $\theta$, are arbitrarily fixed. Then, using \cref{eq:threshold}, we obtain the function $\tilde{\mathcal{L}}: \mathbb{R}^{d+1} \rightarrow \mathbb{R}$ defined as:
\begin{equation}\label{eq:mod_laplacian}
\tilde{\mathcal{L}}(\theta,\lambda) \Equaldef \mathcal{L}(\gamma^\star,\theta,\lambda) = \mathbf{E}\Big[\min\big\{ \| X-\theta\|^2,\lambda \big\}\Big] - \lambda\bar{\kappa}.
\end{equation}

We are now interested in minimizing the function $\tilde{\mathcal{L}}$ over $\theta$ for any $\lambda\geq 0$. The function in \cref{eq:mod_laplacian} is non-convex in $\theta$. In fact, this new objective function is the expectation of a clipped convex function, which is known to lead to NP hard problems for discrete random variables \citep{Barratt:2020}. Therefore, there is little hope that we will be able to obtain globally optimal solutions without making additional assumptions on the problem's structure. However, the function possess a very convenient structure, namely, that the term in the expectation can be decomposed as a difference of convex functions, also known as a \textit{DC decomposition}. The DC decomposition enables the use of an algorithm known as the \textit{Convex-Concave Procedure} (CCP) \citep{Lipp:2016}. For the purpose of optimization over $\theta$, we may temporarily ignore the term $-\lambda\bar{\kappa}$.

%$(\theta^\star,\lambda^\star)$ is a saddle-point if:
%\begin{equation}
%\tilde{\mathcal{L}}(\theta^\star,\lambda) \leq \tilde{\mathcal{L}}(\theta^\star,\lambda^\star) \leq  \tilde{\mathcal{L}}(\theta,\lambda^\star) 
%\end{equation}

\subsection{Convex-Concave Procedure}
To obtain a DC decomposition of the objective function, simply notice that the expectation of the minimum can be expressed as follows:
\begin{equation}
\mathbf{E}\Big[\min\big\{ \| X-\theta\|^2,\lambda \big\}\Big] = \mathbf{E}\Big[\|X-\theta\|^2+\lambda\Big] - \mathbf{E}\Big[\max\{\|X-\theta\|^2,\lambda\}\Big].
\end{equation}

Let $\mathcal{G}_{\lambda}(\theta)\Equaldef\mathbf{E}\big[\max\big\{\|X-\theta\|^2,\lambda \big\}\big]$. The CCP consists of approximating the objective function by replacing $\mathcal{G}_{\lambda}(\theta)$ with its affine approximation at a point $\theta_k \in \mathbb{R}^d$, then minimizing the resulting approximate convex function. Conveniently, the decomposition above leads to:  
\begin{equation}
\theta_{k+1} = \arg \min_{\theta \in \mathbb{R}^d} \Big\{ \mathbf{E}\big[\|X-\theta\|^2\big]+\lambda - \Big( \mathcal{G}_\lambda(\theta_k) + g_\lambda(\theta_k)^T(\theta-\theta_k) \Big)  \Big\},
\end{equation}
where $g_{\lambda}$ is \textit{any} subgradient of $\mathcal{G}_\lambda$. The first benefit of using the DC decomposition above where the first function is quadratic, is that solving the optimization problem above is equivalent to finding the solution of the first-order optimality condition 
\begin{equation}
\nabla_{\theta} \tilde{\mathcal{L}}(\theta_{k+1},\lambda) = - \mathbf{E}\big[2(X-\theta_{k+1})\big] - g_\lambda(\theta_k) =0 \Longrightarrow \theta_{k+1} = \frac{1}{2}g_\lambda(\theta_k)+\mathbf{E}[X].
\end{equation}

% \cite{Boyd:2008}

The second benefit of using the DC decomposition stems from the fact that computing a subgradient $g_{\lambda}$ of the expectation of pointwise maximum of functions is extremely simple. For instance, using elementary subgradient calculus, the following is a suitable choice:
\begin{equation}
g_\lambda(\theta) = -2\mathbf{E}\Big[\big(X-\theta\big)\mathbf{1}\big(\|X-\theta\|^2>\lambda\big)\Big],
\end{equation}
which leads to:
\begin{equation}\label{eq:CCP}
\theta_{k+1} = \mathbf{E}\Big[X\mathbf{1}\big(\|X-\theta_k\|^2\leq\lambda\big)\Big] + \theta_{k}\mathbf{P}\big(\|X-\theta_k\|^2>\lambda\big).
\end{equation}

The recursion in \cref{eq:CCP} is guaranteed to converge \citep{Lanckriet:2009,Lipp:2016}. Moreover, this convergence is irrespective of the pdf $f$ of the random vector $X$. Then, by considering the limit of the sequence $\{\theta_k\}\rightarrow \theta^\star$, we have:
\begin{equation}
\theta^\star = \frac{\mathbf{E}\Big[X\mathbf{1}\big(\|X-\theta^\star\|^2\leq\lambda\big)\Big]}{1-\mathbf{P}\big(\|X-\theta^\star\|^2 > \lambda\big)}.
\end{equation}
If $\lambda$ is chosen such that $\mathbf{P}\big(\|X-\theta^\star\|^2 > \lambda\big) =\bar{\kappa}$, the CCP in \cref{eq:CCP} converges to a saddle point of the Lagrangian and defines a locally optimal solution of the optimization problem in the asymptotic regime. Algorithm \ref{alg:CCP} summarizes the discussion above.

\begin{comment}
\begin{remark}[Connection with subgradient methods]
A subgradient method for minimizing on objective function $J(\theta)$ is a recursion of the form $\theta_{k+1} = \theta_{k} - \alpha_kj(\theta_k)$, where $j$ is a subgradient of $J$. Unless $J$ is a convex function, such algorithms are not guaranteed to converge. However, the algorithm in \cref{eq:CCP} can be understood as a subgradient method with a constant step-size. The CCP can be expressed as
\begin{equation}
\theta_{k+1} = \theta_{k} - \frac{1}{2}j(\theta_k),
\end{equation}
where $j(\theta) = -2\big(\mathbf{E}[X]-\theta\big)-g_\lambda(\theta)$
is a subgradient of the objective function in \cref{eq:mod_laplacian}. This is a remarkable fact because the exact convergence to a local minimizer of a non-convex function with a constant step-size happens very fast in contrast to the typical decreasing step-size sequence with square-summability and non-summability conditions required for convergence in the convex case.
\end{remark}
\end{comment}

\begin{algorithm}[t]

  %\algsetup{linenosize=\tiny}
  %\scriptsize
\caption{Data-driven policy optimization for remote estimation in the mean-field regime}
\begin{enumerate}
\item Given a batch of data $\mathcal{D}=\{x_m\}_{m=1}^M$, estimate $\hat{f}_M$. If $f$ is known, $\hat{f}_M \gets f$
\item \textbf{Initialization}: Compute:
\begin{equation*}
\theta_0 \gets \int_{\mathbb{R}^d} x\hat{f}_M(x)dx
\end{equation*}
and solve for $\lambda^\star$ the following equation:
\begin{equation*}
\int_{\mathbb{R}^d} \mathbf{1}(\|x -\theta_0\|^2\leq \lambda^\star) \hat{f}_M(x)dx = 1 - \bar{\kappa}
\end{equation*}
Set $\lambda_0 \gets \lambda^\star.$

\item \textbf{Iteration}:  For $k=0, 1, 2,\cdots$

\begin{itemize}[leftmargin = 4mm]

\item[a.] Repeat the following recursion until convergence to $\theta^\star$

\begin{equation*}
\theta_{\ell+1} \gets \int_{\mathbb{R}^d} x \mathbf{1}(\|x -\theta_\ell\|^2\leq \lambda_k) \hat{f}_M(x)dx - \bar{\kappa}\theta_\ell
\end{equation*}
Set $\theta_{k+1} \gets \theta^\star$

\item[b.] Solve for $\lambda^\star$ the following equation

\begin{equation*}
\int_{\mathbb{R}^d} \mathbf{1}(\|x - \theta_k\|^2\leq \lambda^\star) \hat{f}_{M}(x)dx = 1 - \bar{\kappa}
\end{equation*}
Set $\lambda_{k+1} \gets \lambda^\star$

\item[c.] Repeat until convergence

\end{itemize}
\end{enumerate}
\label{alg:CCP}
\end{algorithm}

\section{Data-driven algorithm}
One remarkable feature of this problem is the number of structural and convergence results we can obtain without the explicit knowledge of the pdf $f$. However, our algorithm relies on the availability of a smooth pdf to compute the sequence of thresholds $\lambda_k$. For that reason, a data-driven version of our algorithm cannot rely on non-smooth approximations, e.g. using the empirical distribution of the data. Our problem requires a smooth data-driven approximation. One alternative is to use Kernel Density Estimation (KDE), where the pdf $f$ is approximated by:
\begin{equation}
\hat{f}_{M}(x)=\frac{1}{M}\sum_{m=1}^M \prod_{j=1}^d\frac{1}{\sqrt{2\pi}h_M}\exp\left(-\frac{(x_j-x_{mj})^2}{2h_M} \right) \ \ \text{with} \ \ h_M = \frac{1.06}{M^{1/5}}\min\left\{s,\frac{Q}{1.34}\right\},
\end{equation}
%where, we choose to use a Gaussian kernel 
%\begin{equation}
%\mathcal{K}(x) = \frac{1}{\sqrt{2\pi}}\exp\left(-\frac{x^2}{2}\right)
%\end{equation}
which corresponds to KDE using a product Gaussian kernel \citep{Wasserman:2006}. %For smooth densities and the Gaussian kernel, the bandwidth parameter $h_M$ can be chosen according to the Normal reference rule:
%\begin{equation}
%, 
%\end{equation}
The quantity $M$ is the number of data samples in a batch, $s$ is the sample standard deviation and $Q$ is the interquantile range\footnote{The interquantile range is the difference between the 75th and 25th percentiles of the data.}. In addition to the smoothness of the estimate $\hat{f}_M$, one special feature of using KDE to approximate the pdf is that the estimation quality can be quantified via the mean-squared integration error (MISE):
\begin{equation}\label{eq:MISE}
\mathrm{MISE} = \mathbf{E}\left[\int_{\mathbb{R}^d}\big|\hat{f}_{M}(x)-f(x) \big|^2dx\right]= \mathcal{O}\Big(M^{-4/(d+4)}\Big),
\end{equation}
where the expectation is taken with respect to the true distribution $f$ from which the data samples $\{x_m \}_{m=1}^M$ are drawn. It can be shown that no other non-parametric density estimation scheme can produce estimates with a better sample-complexity if the estimation quality is measured using the MISE criterion \citep{Vaart:2000}.

Since the algorithm described in the previous section is agnostic to the pdf $f$, a data-driven version of it simply uses a KDE $\hat{f}_M$ instead of $f$. Following \cite{Wasserman:2006}, we have $\hat{f}_M \stackrel{P}{\rightarrow} f$, $M\rightarrow \infty$. Therefore, \Cref{alg:CCP} produces good estimates of locally optimal solutions to \cref{eq:problem} as the size of the data-set increases. In the next section, we provide a sample complexity result to obtain good data-driven locally optimal solutions to \cref{eq:problem} in the absence of knowledge of the true probabilistic model $f$.

\section{Sample complexity analysis}

Consider a random batch of data samples $\{X_m\}_{m=1}^M$. Assuming that $(\hat{\theta}_M,\hat{\lambda}_M)$ are computed using the procedure in Algorithm 1, they are considered in this section as random variables. Thus, using the approximate pdf to obtain a channel access policy parametrized by $(\hat{\theta}_M,\hat{\lambda}_M)$ may yield a true probability of channel access that violates the constraint in \cref{eq:problem}. Such violation may be very detrimental. Here, we would like to provide a finite sample guarantee that the true probability of channel access when using a policy computed from sampled data will not violate the channel capacity constraint with high-probability. 
Let $\mathbf{P}(U=1) = 1 - \int_{\hat{\theta}_M-\sqrt{\hat{\lambda}_M}}^{\hat{\theta}_M+\sqrt{\hat{\lambda}_M}} f(x)dx,$ and notice that this quantity is itself a random variable.

%The data-driven approach generates an approximate pdf $\hat{f}_M$ from data $\{x_m\}_{m=1}^M$. 

%Our bound uses the MISE in equation XX.

\begin{theorem}[Finite sample complexity guarantee] Consider a remote estimation system designed according to Algorithm 1. Assume that the pdf of the observations $f(x)$ is smooth with full support on $\mathbb{R}^d$. For a wireless network with assymptotic capacity $\bar{\kappa}$ and a channel access policy designed to operate with probability of transmission $\bar{\kappa}-\delta$, for $\delta\in(0,\bar{\kappa})$. Then, the true probability that the the capacity constraint will be violated decays to zero with a rate that satisfies:
\begin{equation}
\mathbf{P}\Big( \mathbf{P}(U=1)>\bar{\kappa} \Big) \leq \mathcal{O}\left(\frac{1}{\delta M^{2/(d+4)}}\right).
\end{equation}
\end{theorem}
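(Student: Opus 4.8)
The plan is to reduce the claimed probability bound to a statement about the $L^1$ discrepancy between the true density $f$ and the kernel estimate $\hat f_M$, and then to feed in the MISE rate from \cref{eq:MISE}. Write $B \Equaldef \{x\in\mathbb{R}^d : \|x-\hat{\theta}_M\|^2 \le \hat{\lambda}_M\}$ for the data-dependent silence region returned by Algorithm 1. By the stopping condition of the algorithm (run with target transmission probability $\bar{\kappa}-\delta$), we have $\int_B \hat f_M(x)\,dx = 1-(\bar{\kappa}-\delta)$, while the \emph{true} transmission probability is $\mathbf{P}(U=1) = 1-\int_B f(x)\,dx$. Subtracting these two identities, the capacity-violation event satisfies
\[
\big\{\mathbf{P}(U=1) > \bar{\kappa}\big\} = \Big\{\textstyle\int_B f < 1-\bar{\kappa}\Big\} = \Big\{\textstyle\int_B (\hat f_M - f) > \delta\Big\},
\]
so everything reduces to controlling the signed mass that the estimation error places on the random set $B$.

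The key observation that decouples the analysis from the data-dependent, and hence correlated, set $B$ is that for \emph{every} measurable set $A$ one has the pathwise bound $\int_A (\hat f_M - f) \le \int_{\mathbb{R}^d}(\hat f_M - f)^+ \le \|\hat f_M - f\|_1$. Applying this with $A=B$ and then Markov's inequality, I would obtain
\[
\mathbf{P}\Big(\textstyle\int_B(\hat f_M - f) > \delta\Big) \le \mathbf{P}\big(\|\hat f_M - f\|_1 > \delta\big) \le \frac{\mathbf{E}\big[\|\hat f_M - f\|_1\big]}{\delta}.
\]
Since the majorant $\|\hat f_M - f\|_1$ no longer involves $B$, the randomness and data-dependence of $\hat{\theta}_M,\hat{\lambda}_M$ cause no trouble; this is precisely what makes the first-moment (Markov) route, rather than a Chebyshev route on the MISE, the natural one, and it is what produces the linear $1/\delta$ factor in the claim.

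It then remains to show $\mathbf{E}[\|\hat f_M - f\|_1] = \mathcal{O}(M^{-2/(d+4)})$, i.e.\ that the expected integrated \emph{absolute} error decays at the square root of the MISE rate. I would establish this pointwise: by Tonelli and Jensen (equivalently Cauchy--Schwarz in the sample space),
\[
\mathbf{E}\big[\|\hat f_M - f\|_1\big] = \int_{\mathbb{R}^d}\mathbf{E}\big|\hat f_M(x)-f(x)\big|\,dx \le \int_{\mathbb{R}^d}\sqrt{\mathbf{E}\big|\hat f_M(x)-f(x)\big|^2}\,dx,
\]
and the pointwise mean-squared error of the product-Gaussian KDE is $\mathcal{O}(M^{-4/(d+4)})$ at the bandwidth used, whose square root is $\mathcal{O}(M^{-2/(d+4)})$. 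The main obstacle is therefore not the rate at a fixed $x$ but the \emph{integrability in $x$}: one must show that $\int_{\mathbb{R}^d}\sqrt{\mathbf{E}|\hat f_M(x)-f(x)|^2}\,dx$ converges and preserves the $M^{-2/(d+4)}$ rate, which is a strictly stronger statement than $\int_{\mathbb{R}^d}\mathbf{E}|\hat f_M-f|^2\,dx = \mathrm{MISE}$ being finite. This is exactly where the standing hypotheses---$f$ smooth with full support and suitably light tails---enter, controlling the bias--variance tradeoff in the tail region where $f(x)\to 0$ so as to guarantee $\int_{\mathbb{R}^d}\sqrt{\mathrm{MSE}(x)}\,dx<\infty$ with the stated rate. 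Once this integrability is in hand, chaining the three displays yields $\mathbf{P}\big(\mathbf{P}(U=1)>\bar{\kappa}\big) \le \mathcal{O}\big(1/(\delta M^{2/(d+4)})\big)$, as asserted.
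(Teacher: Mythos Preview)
Your overall architecture---rewrite the violation event as $\{\int_B(\hat f_M - f) > \delta\}$, dominate it, apply Markov's inequality to produce the $1/\delta$, then extract the $M^{-2/(d+4)}$ rate as the square root of a MISE-type quantity---coincides with the paper's. The divergence is in \emph{how} you dominate $\int_B(\hat f_M - f)$. You pass to the global quantity $\|\hat f_M - f\|_1$ and are then forced to show $\mathbf{E}\|\hat f_M - f\|_1 = \mathcal{O}(M^{-2/(d+4)})$, which, as you correctly flag, requires $\int_{\mathbb{R}^d}\sqrt{\mathrm{MSE}(x)}\,dx$ to be finite with that rate. That does \emph{not} follow from the MISE bound alone (finiteness of $\int\mathrm{MSE}(x)\,dx$ does not give finiteness of $\int\sqrt{\mathrm{MSE}(x)}\,dx$), and for the variance term it essentially amounts to $\sqrt{f}\in L^1(\mathbb{R}^d)$, a light-tail condition not among the theorem's stated hypotheses.

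The paper avoids this by keeping the integral over the \emph{bounded} ball $B$ and applying H\"older (Cauchy--Schwarz) in the $x$-variable there:
\[
\Big|\mathbf{P}(U=1)-\hat{\mathbf{P}}(U=1)\Big| \;\le\; \int_B|\hat f_M - f| \;\le\; \mathrm{vol}(B)^{1/2}\Big(\int_{\mathbb{R}^d}|\hat f_M - f|^2\Big)^{1/2}.
\]
A second Cauchy--Schwarz, now in the sample expectation, separates the two factors and gives
\[
\mathbf{E}\Big[\big|\mathbf{P}(U=1)-\hat{\mathbf{P}}(U=1)\big|\Big] \;\le\; \big(\mathbf{E}[\mathrm{vol}(B)]\big)^{1/2}\,(\mathrm{MISE})^{1/2},
\]
so the $M^{-2/(d+4)}$ rate comes \emph{directly} from $\sqrt{\mathrm{MISE}}$ and \cref{eq:MISE}, with no pointwise or tail analysis of $\mathrm{MSE}(x)$. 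The price is that one must argue $\mathbf{E}[\mathrm{vol}(B)]$ (equivalently $\mathbf{E}[\hat\lambda_M^{d/2}]$) stays bounded in $M$, which the paper asserts. In short, your route trades that uniform-volume control for a tail condition on $f$; the paper's route is shorter and uses only the black-box MISE bound, at the cost of the $\hat\lambda_M$-boundedness claim.
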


\begin{proof}
To guarantee that the probability constraint will be met with high probability, we need to back off from the capacity by a pre-specified constant $\delta\in(0,\bar{\kappa})$. 
Let $\hat{\mathbf{P}}(U=1) \Equaldef  1 - \int_{\hat{\theta}_M-\sqrt{\hat{\lambda}_M}}^{\hat{\theta}_M+\sqrt{\hat{\lambda}_M}} \hat{f}_M(x)dx = \bar{\kappa}$ (by design). Then, notice that 
\begin{equation}
\mathbf{P}\left( \hat{\mathbf{P}}(U=1)-\mathbf{P}(U=1) \geq \delta \right) \leq \mathbf{P}\left( \big|\hat{\mathbf{P}}(U=1)-\mathbf{P}(U=1) \big|\geq \delta \right).
\end{equation}

%Note that $\theta_D$ and $\lambda_D$ are designed from the data $D$ such that under the approximate pdf $\hat{f}_D$, the probability of transmission is exactly equal to $\bar{\kappa}$. Therefore, to avoid the all collision regime, we must guarantee that under the true pdf, the probability of transmission will not exceed $\bar{\kappa}$.

%Therefore, our goal is to guarantee with high probability that:
%\begin{equation}
%\mathbf{P}(U=1) \leq \bar{\kappa} - \delta
%\end{equation} 
%in other words:
%\begin{equation}
%\hat{\mathbf{P}}(U=1) - \mathbf{P}(U=1)   \geq  \delta 
%\end{equation} 
%with high probability. or even,
%\begin{equation}
%\int_{\theta_D-\sqrt{\lambda}_D}^{\theta_D+\sqrt{\lambda}_D} [\hat{f}_D(x)-f(x)]dx \geq  \delta 
%\end{equation} 
The following inequalities hold:
\begin{eqnarray*}
\Big|\mathbf{P}(U=1)-\hat{\mathbf{P}}(U=1)\Big| & = & \left|\int_{\hat{\theta}_M-\sqrt{\hat{\lambda}_M}}^{\hat{\theta}_M-\sqrt{\hat{\lambda}_M}} \Big[\hat{f}_M(x)-f(x)\Big]dx\right| \\
& \stackrel{(a)}{\leq} & \int_{\mathbb{R}} \mathbf{1}\Big({\hat{\theta}_M-\sqrt{\hat{\lambda}_M}} \leq x \leq {\hat{\theta}_M-\sqrt{\hat{\lambda}_M}} \Big) \left|\hat{f}_M(x)-f(x)\right|dx \\
&\stackrel{(b)}{\leq}  & \left(2\sqrt{\hat{\lambda}_M}\right)^{1/2} \left(\int_{\mathbb{R}}\left|\hat{f}_M(x)-f(x)\right|^2dx\right)^{1/2}
\end{eqnarray*}
where $(a)$ follows from the Triangle inequality and $(b)$ follows from H\"{o}lder's inequality. Finally, we apply the expectation operator with respect to the data $\{X_m\}_{m=1}^M$, obtaining via the Cauchy-Scharz inequality the following bound:
\begin{equation}
\mathbf{E}\left[\Big|\mathbf{P}(U=1)-\hat{\mathbf{P}}(U=1)\Big|\right] \leq \left(2\mathbf{E}\left[\sqrt{\hat{\lambda}_M}\right]\right)^{1/2} \left(\mathbf{E}\left[\int_{\mathbb{R}^d}\left|\hat{f}_M(x)-f(x)\right|^2dx\right]\right)^{1/2}.
\end{equation}

Now, since we obtain $\hat{\lambda}_M$ from an algorithmic procedure that depends on a batch of sampled data, we have very little control over it, but we can guarantee that this term is bounded for any number of samples $M$. %To see that, we note that $\bar{\kappa}\in (0,1)$, and $\lambda^{\star}$ in Algorithm 1 is always set such that $\mathbf{P}(U=1)$ is equal to $\bar{\kappa}$. Since we are using Gaussian kernels, the support of $\hat{f}_M(x)$ is the entire real line. 
%Therefore, $\lambda^\star < \infty$ for any sample size $M$, otherwise, $\mathbf{P}(U=1)=0$, which would yield a contradiction. Hence, 
%$\mathbf{E}[\lambda_D] <\infty.$ 
Therefore, using \cref{eq:MISE}, we obtain:
$\mathbf{E}\left[\Big|\mathbf{P}(U=1)-\hat{\mathbf{P}}(U=1)\Big|\right] = \mathcal{O}\big(M^{-2/(d+4)}\big).$ 
%Now we use the result above to obtain a high-probability bound on the violation of the constraint the optimization problem in Eq XX. 
The following holds:
\begin{equation}
\mathbf{P}\big( \mathbf{P}(U=1) > \bar{\kappa} \big)  = %\mathbf{P}\big[ \mathbf{P}(U=1) - (\bar{\kappa}-\delta)> \delta \big] \\
%& = & 
\mathbf{P}\big(\mathbf{P}(U=1) - \hat{\mathbf{P}}(U=1)> \delta \big) \\
 \stackrel{(c)}{\leq}  \frac{\mathbf{E}\big[ \big| \mathbf{P}(U=1) - \hat{\mathbf{P}}(U=1) \big|  \big]}{\delta},
\end{equation}
where $(c)$ follows from Markov's inequality. %Finally, we obtain 
%\begin{equation}
%\mathbf{P}\big(\mathbf{P}(U=1) > \bar{\kappa} \big) = \mathcal{O}\left(\frac{1}{\delta \cdot M^{2/(d+4)}}\right).
%\end{equation}
\end{proof}

\begin{remark}[Interpretation]
The result above mimics results related to the capacity in the Shannon sense. To guarantee a decay in the probability of violating the channel capacity constraint we back off from $\bar{\kappa}$ by $\delta$. Notice that as $\delta \rightarrow 0$, our scheme based on KDE requires an increasingly large number of samples. %To complete our study, we need to quantify how much loss in estimation performance this back off incurs.
\end{remark}

\section{Numerical results}
Consider data samples generated according to the following Gaussian mixture:
\begin{equation}
X \sim 0.2\mathcal{N}(-2,0.2)  + 0.2\mathcal{N}(-1,0.075)   + 0.1\mathcal{N}(0,0.1)   + 0.3\mathcal{N}(1,0.1)  + 0.2\mathcal{N}(2,0.1). 
\end{equation}

\begin{wrapfigure}{R}{0.6\textwidth}
\centering
\includegraphics[width=0.6\textwidth]{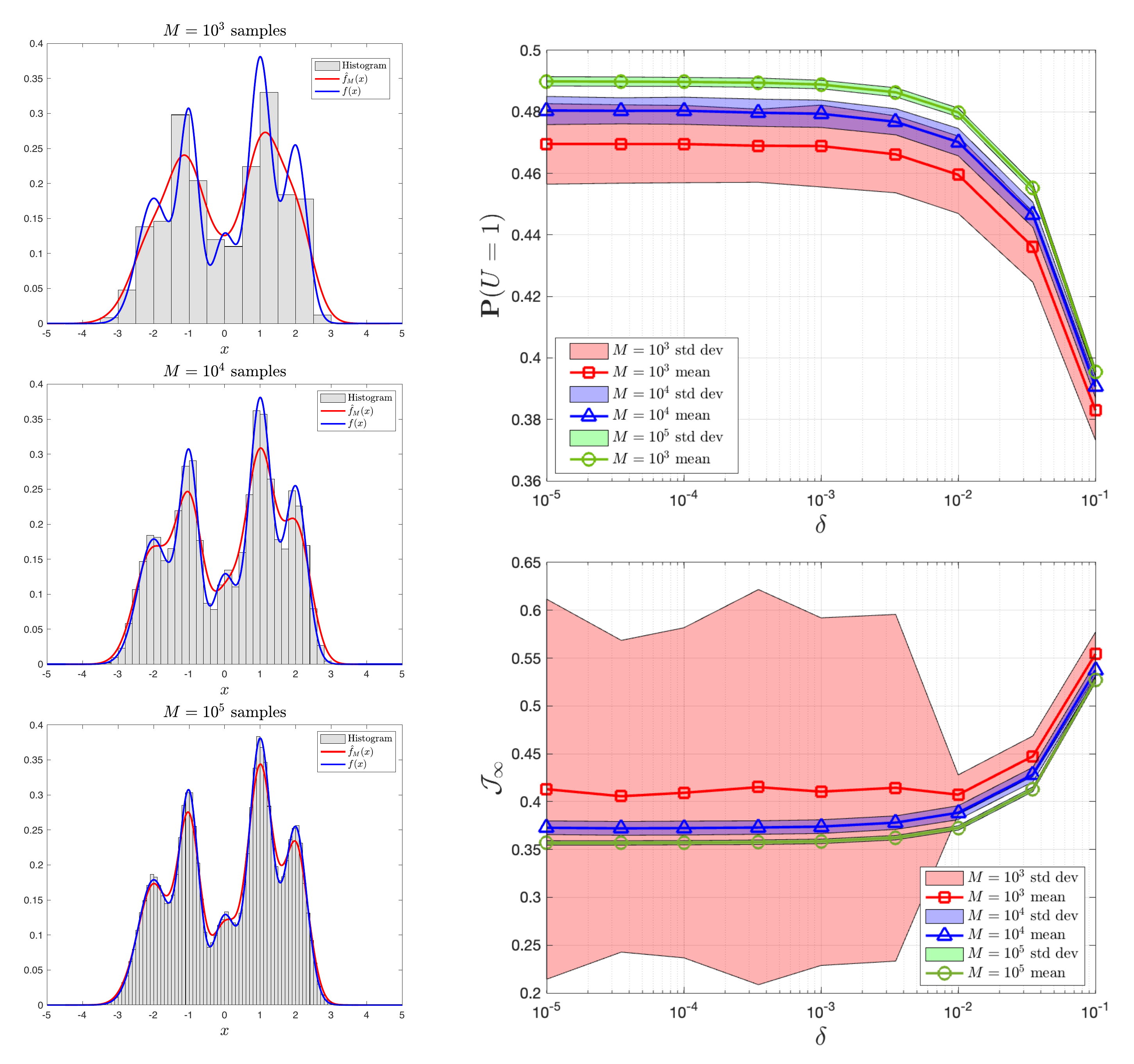}
\vspace{-20pt}
\caption{Numerical results.}
\label{fig:numerical}
\end{wrapfigure}
Assuming that this distribution is known, and setting $\bar{\kappa}=0.5$, \cref{alg:CCP} results in $\theta^\star=0.0592$ and $\lambda^\star = 1.5063.$ The value of the optimal NMSE in the asymptotic regime is $\mathcal{J}_{\infty}^\star = 0.3411$. Using the data-driven approach based on KDE for several different sample sizes ($M=10^3,10^4,10^5$), we obtain the curves displayed in \cref{fig:numerical}, where we vary the capacity back-off parameter from $\delta=10^{-5}$ to $10^{-1}$. The estimation performance of the system degrades with $\delta$, but the probability that the capacity constraint will be violated decreases sharply to zero even for moderate values of the sample size \cref{fig:numerical} (right, top). However, the KDE is very sensitive to $M$, which can be seen from \cref{fig:numerical} (left), yielding in a large variance in performance for smaller values of $M$, e.g. $M=10^3$ \cref{fig:numerical} (right, bottom). An important observation is that the performance for systems designed using a random batch of data, quickly concentrate around their means, e.g. $M>10^4$.\footnote{All the code used to obtain the numerical results contained herein can be found on \url{https://github.com/mullervasconcelos/L4DC22.git}}

\section{Conclusions and future work}

Scalability issues often challenge modern communication networks and their medium access control mechanisms. We argue that when it comes to applications such as the Internet of Things or in robotic swarms, it is beneficial from the analytical perspective to study the design of policies in the mean-field regime, i.e., systems with an infinite number of devices. One problem of interest consists of estimating an iid process where each random variable is accessed by a different device and transmitted over a constrained network to a remote location. Here, we showed that the interference between simulataneously transmitting agents can be effectively mitigated with appropriate design. More importantly, if we back off from the channel capacity by a constant amount, we can guarantee that the data-driven design based on Kernel Desnity Estimation will perform satisfactorily, meaning, will not violate a probability constraint with high probability.

% Acknowledgments---Will not appear in anonymized version
%\acks{The author would like to thank enlightening discussions with Prof. Urbashi Mitra from the University of Southern California. The author was supported by funds from the Commonwealth Cyber Initiative (CCI).}

%\appendix

%\section{Proof of Theorem 1}\label{sec:opt_problem}

\newpage

\bibliographystyle{natbib}
\bibliography{L4DC22}

\end{document}